\title{Re-examining the Legendre-Gauss-Lobatto Pseudospectral Methods for Optimal Control}
\author{Yilin Zou\thanks{Ph.D. candidate, School of Aerospace Engineering, zouyl22@mails.tsinghua.edu.cn.} }
\author{Fanghua Jiang\thanks{Associate Professor, School of Aerospace Engineering, jiangfh@tsinghua.edu.cn (Corresponding Author).}}
\affil{Tsinghua University, Beijing, China}
\date{\today}
\newtheorem{prop}{Proposition}
\begin{document}

\maketitle

\begin{abstract} 
    Pseudospectral methods represent an efficient approach for solving optimal control problems. While Legendre-Gauss-Lobatto (LGL) collocation points have traditionally been considered inferior to Legendre-Gauss (LG) and Legendre-Gauss-Radau (LGR) points in terms of convergence properties, this paper presents a rigorous re-examination of LGL-based methods. We introduce an augmented formulation that enhances the standard LGL collocation approach by incorporating an additional degree of freedom (DOF) into the interpolation structure. We demonstrate that this augmented formulation is mathematically equivalent to the integral formulation of the LGL collocation method. Through analytical derivation, we establish that the adjoint system in both the augmented differential and integral formulations corresponds to a Lobatto IIIB discontinuous collocation method for the costate vector, thereby resolving the previously reported convergence issues. Our comparative analysis of LG, LGR, and LGL collocation methods reveals significant advantages of the improved LGL approach in terms of discretized problem dimensionality and symplectic integration properties. Numerical examples validate our theoretical findings, demonstrating that the proposed LGL-based method achieves comparable accuracy to LG and LGR methods while offering superior computational performance for long-horizon optimal control problems due to the preservation of symplecticity.
\end{abstract}

\section{Introduction}
Pseudospectral methods are numerical techniques for solving general nonlinear optimal control problems, with applications in trajectory optimization~\cite{kellyIntroductionTrajectoryOptimization2017} and model predictive control~\cite{wangOptimalRocketLanding2019}. These methods have gained significant attention in the research community due to their computational efficiency and high approximation accuracy when solving complex control problems.

Optimal control problems involve optimizing functions over a continuous time domain, subject to dynamics, path constraints, and boundary conditions. Pseudospectral methods discretize these problems by approximating state and control variables with interpolation polynomials at specifically selected collocation points. By enforcing the dynamics and constraints at these points, the continuous optimal control problem transforms into a nonlinear programming (NLP) problem. This discretized formulation can then be solved using established optimization algorithms designed for large-scale problems with sparse structures, such as IPOPT~\cite{wachterImplementationInteriorpointFilter2006} and SNOPT~\cite{gillSNOPTSQPAlgorithm2002}. Most commonly used pseudospectral software includes GPOPS~\cite{pattersonGPOPSIIMATLABSoftware2014a}, PSOPT~\cite{5612676}, and DIDO~\cite{rossEnhancementsDIDOOptimal2020}.

To achieve high function approximation and integration accuracy, collocation points are selected based on the roots of orthogonal polynomials. Three commonly used sets are the Legendre-Gauss (LG)~\cite{bensonDirectTrajectoryOptimization2006a,gargPseudospectralMethodsSolving2011}, Legendre-Gauss-Radau (LGR)~\cite{gargDirectTrajectoryOptimization2011a,gargPseudospectralMethodsSolving2011}, and Legendre-Gauss-Lobatto (LGL)~\cite{elnagarPseudospectralLegendreMethod1995,fahrooCostateEstimationLegendre2001} points. These sets differ primarily in their inclusion of interval endpoints: LG points exclude both endpoints, LGR points include one endpoint, and LGL points include both endpoints. These sets also offer different orders of polynomial approximation and integration accuracy: LG points can accurately integrate polynomials of degree up to $2N-1$, LGR points up to $2N-2$, and LGL points up to $2N-3$~\cite{gargUnifiedFrameworkNumerical2010a}.

The relationship between the Lagrange multipliers of the discretized NLP problem and the costate variables of the continuous optimal control problem has been comprehensively studied. \citet{rossReviewPseudospectralOptimal2012a} provided an overview of the covector mapping principle, establishing the existence of multipliers that converge to the continuous costate variable. \citet{gargUnifiedFrameworkNumerical2010a} presented the costate approximation scheme for the LG, LGR, and LGL points. The work demonstrated that under the standard LGL pseudospectral method in differential form, the costate fails to converge to the continuous costate equation as the number of collocation points increases. This issue arises from the null space in the discretized costate dynamics equation. \citet{francolinCostateApproximationOptimal2015} further established the theoretical foundations for costate approximation schemes for the LG and LGR points, but similar analysis for the LGL method remained incomplete.

Previous research has established that LGL points in differential form exhibit inferior accuracy and convergence properties compared to LG and LGR points~\cite{gargUnifiedFrameworkNumerical2010a,gargOverviewThreePseudospectral2010}. Although \citet{gongConnectionsCovectorMapping2008} proposed a ``closure conditions'' approach to determine the approximate costate vector, the accurate convergence of the costate for the LGL points is not guaranteed. In this paper, we re-examine the LGL points and introduce an enhanced formulation that addresses the limitations of standard LGL pseudospectral methods by incorporating an additional degree of freedom (DOF) into the interpolation structure. We demonstrate that this augmented formulation is mathematically equivalent to the integral formulation of the LGL collocation method, where the additional DOF remains implicit. We derive the adjoint system for both the augmented differential formulation and the integral formulation, showing that it corresponds to a Lobatto IIIB discontinuous collocation method for the costate vector with respect to the continuous costate equation. This finding validates the covector mapping property of the LGL points. Based on this new formulation, we conduct a comparative analysis of LG, LGR, and LGL points, revealing the advantages of LGL points in terms of discretized problem dimensionality and symplectic properties. We present two numerical examples that demonstrate the effectiveness of the proposed method.

This paper is organized as follows. Section~\ref{sec:problem_formulation} defines the optimal control problem and establishes the first-order necessary conditions for optimality. Section~\ref{sec:augmented_formulation} proposes an augmented polynomial interpolation scheme and derives the corresponding adjoint system in differential form. In Section~\ref{sec:integral_formulation}, we demonstrate the mathematical equivalence between our augmented LGL collocation method and the integral formulation, along with the derivation of the adjoint system in integral form. Section~\ref{sec:comparison} provides a comparative analysis of the LG, LGR, and LGL collocation methods. Section~\ref{sec:examples} presents numerical examples that validate our theoretical findings and illustrate the performance of the proposed method. Finally, Section~\ref{sec:conclusion} summarizes our contributions and offers concluding remarks.

\section{Problem Formulation and First-Order Necessary Conditions for Optimality}\label{sec:problem_formulation}
This work builds upon and extends the formulation presented in~\citet{gargUnifiedFrameworkNumerical2010a} and~\citet{francolinCostateApproximationOptimal2015}. For consistency and clarity, we adopt the notational conventions established in the literature. We present a concise formulation of the optimal control problem below both to maintain self-containment and to establish the necessary mathematical foundation for our subsequent analysis.

We formulate a continuous-time optimal control problem on the time interval $[-1, 1]$. Let $\bm{x}(\tau) \in \mathbb{R}^n$ denote the state vector and $\bm{u}(\tau) \in \mathbb{R}^m$ denote the control vector. The problem is expressed as
\begin{align}
    \min_{\bm{x},\bm{u}} & \quad \Phi\left(\bm{x}(1)\right) + \int_{-1}^1 g(\bm{x}(\tau), \bm{u}(\tau))\,\mathrm{d}\tau\\
    \text{s.t.} & \quad \dot{\bm{x}}(\tau) = \bm{f}(\bm{x}(\tau), \bm{u}(\tau)), \quad \tau \in [-1,1]\\
    & \quad \bm{x}(-1) = \bm{x}_0
\end{align}
where $\Phi: \mathbb{R}^n \rightarrow \mathbb{R}$ represents the terminal cost function, $g: \mathbb{R}^n \times \mathbb{R}^m \rightarrow \mathbb{R}$ is the running cost function, $\bm{f}: \mathbb{R}^n \times \mathbb{R}^m \rightarrow \mathbb{R}^n$ is the system dynamics function, and $\bm{x}_0 \in \mathbb{R}^n$ specifies the initial state condition.

The Hamiltonian associated with the optimal control problem is defined as
\begin{equation}
    \mathcal{H}(\bm{x}, \bm{\lambda}, \bm{u}) = g(\bm{x}, \bm{u}) + \langle\bm{\lambda}, \bm{f}(\bm{x}, \bm{u})\rangle
\end{equation}
where $\bm{\lambda}(\tau) \in \mathbb{R}^n$ represents the costate vector, and $\langle \cdot, \cdot \rangle$ denotes the inner product. 

The first-order necessary conditions for optimality, derived from calculus of variations and Pontryagin's minimum principle (PMP), are given by the following system of equations
\begin{align}
    \bm{\lambda}(1) & = \nabla_{\bm{x}} \Phi\left(\bm{x}(1)\right)\label{eq:costate_boundary}\\
    \dot{\bm{\lambda}}(\tau) & = -\nabla_{\bm{x}} \mathcal{H}(\bm{x}(\tau), \bm{\lambda}(\tau), \bm{u}(\tau)), \quad \tau \in [-1,1]\label{eq:costate_dynamics}\\
    \bm{0} & = \nabla_{\bm{u}} \mathcal{H}(\bm{x}(\tau), \bm{\lambda}(\tau), \bm{u}(\tau)), \quad \tau \in [-1,1]\label{eq:costate_control}
\end{align}

\section{Differential Formulation using LGL Collocation}\label{sec:augmented_formulation}
\subsection{Augmented Polynomial Interpolation Scheme}
The Legendre-Gauss-Lobatto (LGL) nodes comprise a set of $N$ distinct points $\{\tau_i\}_{i=1}^N \subset [-1, 1]$, where $\tau_1 = -1$ and $\tau_N = 1$ are the boundary points, and $\{\tau_i\}_{i=2}^{N-1}$ are the roots of $\dot{P}_{N-1}(\tau)$, the derivative of the Legendre polynomial of degree $N-1$. These nodes satisfy $-1 = \tau_1 < \tau_2 < \cdots < \tau_{N-1} < \tau_N = 1$. Associated with each node $\tau_i$ is a positive quadrature weight $w_i$, forming the LGL quadrature rule. This quadrature rule exactly integrates polynomials of degree up to $2N-3$ over the interval $[-1,1]$~\cite{gargUnifiedFrameworkNumerical2010a}. In this work, we denote the LGL nodes and weights as column vectors $\bm{\tau}, \bm{w} \in \mathbb{R}^N$, respectively.

To address the limitations of standard LGL pseudospectral methods identified in the literature \cite{gargUnifiedFrameworkNumerical2010a,gargOverviewThreePseudospectral2010}, we propose increasing the order of the approximation polynomial for the state variables from degree $N-1$ to $N$. This enhancement is achieved by introducing an additional DOF to the interpolation structure while preserving the essential collocation property. Specifically, we augment the standard Lagrange interpolation polynomial with an additional term as
\begin{equation}
    \bm{x}(\tau) = \sum_{i=1}^{N} \bm{X}_i L_i(\tau) + \bm{X}_p L_p(\tau)\label{eq:interpolation}
\end{equation}
where $\bm{X} \in \mathbb{R}^{N \times n}$ is the matrix of state values at the LGL nodes, with the subscript $i$ denoting the $i$-th row, and $\bm{X}_p \in \mathbb{R}^n$ is a row vector representing the additional coefficient. In \eqref{eq:interpolation}, $L_i(\tau)$ denotes the $i$-th Lagrange interpolating polynomial for the LGL nodes, satisfying $L_i(\tau_j) = \delta_{ij}$, where $\delta_{ij}$ is the Kronecker delta function. The auxiliary $N$-th degree polynomial $L_p(\tau)$ is defined as
\begin{equation}
    L_p(\tau) = \prod_{i=1}^{N}\left(\tau -\tau_i\right)
\end{equation}
By construction, this formulation guarantees that $L_p(\tau_i) = 0$ for all LGL nodes $\{\tau_i\}_{i=1}^N$, thereby ensuring that the values of the approximation polynomial at the collocation points remain exactly equal to $\bm{X}_i$, independent of the newly introduced variable $\bm{X}_p$. Consequently, $\bm{X}_p$ can be viewed as a non-collocation variable that contributes to the interpolation process without representing the polynomial value at any specific point.

The derivative of the augmented polynomial $L_p(\tau)$ evaluated at each LGL node $\tau_i$ constitutes a column vector $\bm{D}_p \in \mathbb{R}^N$. The elements of this vector are given by
\begin{equation}
    \left(\bm{D}_p\right)_i = \dot{L}_p(\tau_i) = \prod_{j=1, j\neq i}^{N}\left(\tau_i -\tau_j\right), \quad i = 1,2,\ldots,N
\end{equation}

To facilitate the analysis of the augmented interpolation scheme, we define the augmented derivative matrix $\tilde{\bm{D}} \in \mathbb{R}^{N \times (N+1)}$ by horizontally concatenating the standard differentiation matrix $\bm{D}$ with the vector $\bm{D}_p$ as
\begin{equation}
    \tilde{\bm{D}} = \begin{bmatrix}
        \bm{D} & \bm{D}_p
    \end{bmatrix}
\end{equation}
Correspondingly, we define the augmented state matrix $\tilde{\bm{X}} \in \mathbb{R}^{(N+1) \times n}$ by vertically concatenating the matrix of nodal state values $\bm{X}$ with the additional coefficient vector $\bm{X}_p$ as
\begin{equation}
    \tilde{\bm{X}} = \begin{bmatrix}
        \bm{X} \\ \bm{X}_p
    \end{bmatrix}
\end{equation}
This augmented formulation allows us to express the state derivative computation concisely while preserving the essential properties of the interpolation scheme.

The discretized NLP problem is given by
\begin{align}
    \min_{\bm{X}, \bm{U}} & \quad \Phi\left(\bm{X}_N\right) + \langle \bm{w}, \bm{G}(\bm{X}, \bm{U})\rangle\label{eq:discretized_cost}\\
    \text{s.t.} & \quad \bm{F}(\bm{X}, \bm{U}) = \bm{D} \bm{X} + \bm{D}_p \bm{X}_p = \tilde{\bm{D}} \tilde{\bm{X}}\label{eq:discretized_dynamics}\\
    & \quad \bm{X}_1 = \bm{x}_0\label{eq:discretized_boundary}
\end{align}
where $\bm{U} \in \mathbb{R}^{N \times m}$ is the matrix of control values at the LGL nodes, $\bm{G} \in \mathbb{R}^N$ is the vector of running costs evaluated at the LGL nodes with elements computed using $\bm{X}$ and $\bm{U}$, and $\bm{F} \in \mathbb{R}^{N \times n}$ is the matrix of the dynamics function evaluated at the LGL nodes.

The following property of the augmented derivative matrix $\tilde{\bm{D}}$ demonstrates how the introduction of additional degrees of freedom resolves the singularity issues inherent in the standard LGL differentiation matrix $\bm{D}$.
\begin{prop}\label{prop:inverse}
    Let $\tilde{\bm{D}}_{2:N+1}$ denote the $N \times N$ submatrix of $\tilde{\bm{D}}$ comprising its columns from 2 to $N+1$, and let $\tilde{\bm{D}}_1$ denote the first column of $\tilde{\bm{D}}$. Then:
    \begin{enumerate}[label=(\roman*)]
        \item The matrix $\tilde{\bm{D}}_{2:N+1}$ is nonsingular.
        \item The following relation holds between the first column and the remaining columns of $\tilde{\bm{D}}$:
        \begin{equation}
            \tilde{\bm{D}}_{2:N+1}^{-1} \tilde{\bm{D}}_1 = \begin{bmatrix}
                -\bm{1}_{N-1}\\
                0
            \end{bmatrix}\label{eq:inverse}
        \end{equation}
        where $\bm{1}_{N-1}$ denotes a column vector of ones of dimension $N-1$.
    \end{enumerate}
\end{prop}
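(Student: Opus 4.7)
The plan is to interpret each column of $\tilde{\bm{D}}$ as the vector of derivative values at the LGL nodes of a corresponding polynomial, and then reduce both claims to elementary facts about Lagrange interpolation. Specifically, the $i$-th entry of $\tilde{\bm{D}}_j$ is $\dot{L}_j(\tau_i)$ for $j\le N$, and the $i$-th entry of $\bm{D}_p$ is $\dot{L}_p(\tau_i)$. Hence for any coefficients $(\alpha_2,\dots,\alpha_N,\beta)$, the combination $\sum_{j=2}^{N}\alpha_j \tilde{\bm{D}}_j+\beta\bm{D}_p$ is precisely the vector of values of $\dot{p}$ at the LGL nodes, where $p(\tau)=\sum_{j=2}^N \alpha_j L_j(\tau)+\beta L_p(\tau)$.

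For part (i), I would argue that if this combination vanishes, then $\dot{p}(\tau_i)=0$ for all $N$ nodes. Since $p$ has degree at most $N$ (the $L_j$ contribute degree $\le N-1$ and $L_p$ contributes degree $N$), its derivative has degree at most $N-1$, and therefore vanishing at $N$ distinct points forces $\dot{p}\equiv 0$. Thus $p$ is constant. But $L_j(\tau_1)=0$ for $j\ge 2$ and $L_p(\tau_1)=0$ by construction, so $p(\tau_1)=0$ and hence $p\equiv 0$. Evaluating at $\tau_i$ for $i=2,\dots,N$ gives $\alpha_i = p(\tau_i)=0$, after which $\beta L_p\equiv 0$ together with $L_p\not\equiv 0$ forces $\beta=0$. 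This establishes that the columns of $\tilde{\bm{D}}_{2:N+1}$ are linearly independent, proving nonsingularity.

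For part (ii), I would invoke the classical identity $\sum_{i=1}^N L_i(\tau)\equiv 1$, which holds because Lagrange interpolation reproduces constants exactly. Differentiating and evaluating at $\tau_k$ yields $\sum_{i=1}^N \dot{L}_i(\tau_k)=0$, i.e., $\bm{D}\bm{1}_N=\bm{0}$, so the columns of $\bm{D}$ sum to zero. This gives $\tilde{\bm{D}}_1=-\sum_{j=2}^N \tilde{\bm{D}}_j + 0\cdot\bm{D}_p$, which rewrites as $\tilde{\bm{D}}_1=\tilde{\bm{D}}_{2:N+1}\bigl[-\bm{1}_{N-1}^{\top},\,0\bigr]^{\top}$. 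Left-multiplying by $\tilde{\bm{D}}_{2:N+1}^{-1}$, whose existence is guaranteed by part (i), yields the stated formula. I expect the only non-routine step to be the identification of the correct polynomial $p$ in part (i); once its degree is controlled and its vanishing at $\tau_1$ is noted, the rest is a standard count of zeros versus degree.
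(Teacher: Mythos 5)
Your proof is correct and follows essentially the same route as the paper: part (i) identifies a null vector of $\tilde{\bm{D}}_{2:N+1}$ with a polynomial whose derivative vanishes at all $N$ nodes and which vanishes at $\tau_1$, and part (ii) applies $\tilde{\bm{D}}$ to the coefficient vector of the constant polynomial. Your version is marginally more explicit in extracting $\alpha_i=0$ and $\beta=0$ from $p\equiv 0$, a step the paper leaves implicit, but the argument is the same.
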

\begin{proof}
    For part (i), we establish the nonsingularity of $\tilde{\bm{D}}_{2:N+1}$ by contradiction. Suppose there exists a nonzero vector $\bm{V} \in \mathbb{R}^{N}$ such that $\tilde{\bm{D}}_{2:N+1}\bm{V} = \bm{0}$. Consider the polynomial $v(\tau) = \sum_{i=2}^{N} V_{i-1}L_i(\tau) + V_N L_p(\tau)$. The condition $\tilde{\bm{D}}_{2:N+1}\bm{V} = \bm{0}$ implies that $\dot{v}(\tau_i) = 0$ for $i = 1,2,\ldots,N$. Since $\dot{v}(\tau)$ is a polynomial of degree at most $N-1$ with $N$ distinct zeros, it must be identically zero. This implies that $v(\tau)$ is a constant polynomial. However, by construction, $v(\tau_1) = 0$ at the boundary point $\tau_1 = -1$, which means $v(\tau) \equiv 0$. This contradicts our assumption that $\bm{V}$ is nonzero, thereby establishing that $\tilde{\bm{D}}_{2:N+1}$ is nonsingular.

    For part (ii), we note that the rows of $\tilde{\bm{D}}$ represent the derivatives of polynomials evaluated at the nodes. Since the constant polynomial $p(\tau) = 1$ has zero derivative everywhere, we have $\tilde{\bm{D}} \begin{bmatrix}
        \bm{1}_{N}\\
        0
    \end{bmatrix} = \bm{0}$. This can be rewritten as $\tilde{\bm{D}}_1 + \tilde{\bm{D}}_{2:N+1}\begin{bmatrix}\bm{1}_{N-1}\\0\end{bmatrix} = \bm{0}$, which implies $\tilde{\bm{D}}_{2:N+1}^{-1}\tilde{\bm{D}}_1 = \begin{bmatrix}-\bm{1}_{N-1}\\0\end{bmatrix}$, completing the proof.
\end{proof}

\subsection{Adjoint system in augmented differential form}
To derive the first-order necessary optimality conditions for the discretized problem formulated in \eqref{eq:discretized_cost}--\eqref{eq:discretized_boundary}, we define the Lagrangian function as follows
\begin{equation}
    \begin{aligned}
        \mathcal{L} = \Phi\left(\bm{X}_N\right) + \langle \bm{w}, \bm{G}(\bm{X}, \bm{U}) \rangle + \langle \bm{\mu}, \bm{x}_0 - \bm{X}_1\rangle  
        + \langle \bm{\Lambda}, \bm{F}(\bm{X}, \bm{U}) - \bm{D} \bm{X} - \bm{D}_p \bm{X}_p\rangle
    \end{aligned}
\end{equation}
where $\bm{\Lambda} \in \mathbb{R}^{N \times n}$ is the matrix of Lagrange multipliers associated with the discretized dynamics constraints in \eqref{eq:discretized_dynamics}, and $\bm{\mu} \in \mathbb{R}^{n}$ is the vector of Lagrange multipliers for the initial boundary condition constraint in \eqref{eq:discretized_boundary}. 

The first-order necessary conditions for optimality (the Karush-Kuhn-Tucker (KKT) conditions) of the discretized problem \eqref{eq:discretized_cost}--\eqref{eq:discretized_boundary} are derived by differentiating the Lagrangian with respect to each variable. This yields:
\begin{align}
    & \nabla_{\bm{X}}\left\langle \bm{w}_1, \bm{G}\left(\bm{X}_1, \bm{U}_1\right)\right\rangle + \nabla_{\bm{X}}\left\langle\bm{\Lambda}_1, \bm{F}\left(\bm{X}_1, \bm{U}_1\right)\right\rangle = \bm{D}_1^\mathrm{T}\bm{\Lambda} + \bm{\mu}\label{eq:adjoint_boundary}\\
    & \nabla_{\bm{X}}\left\langle\bm{w}_{2: N-1}, \bm{G}\left(\bm{X}_{2: N-1}, \bm{U}_{2: N-1}\right)\right\rangle  + \nabla_{\bm{X}}\left\langle\bm{\Lambda}_{2: N-1}, \bm{F}\left(\bm{X}_{2: N-1}, \bm{U}_{2: N-1}\right)\right\rangle = \bm{D}_{2: N-1}^\mathrm{T}\bm{\Lambda}\\
    & \nabla_{\bm{X}}\left\langle\bm{w}_N, \bm{G}\left(\bm{X}_N, \bm{U}_N\right)\right\rangle + \nabla_{\bm{X}}\left\langle\bm{\Lambda}_N, \bm{F}\left(\bm{X}_N, \bm{U}_N\right)\right\rangle = \bm{D}_N^\mathrm{T}\bm{\Lambda} - \nabla_{\bm{X}} \Phi\left(\bm{X}_N\right)\\
    & \nabla_{\bm{U}}\left\langle \bm{w}, \bm{G}\left(\bm{X}, \bm{U}\right)\right\rangle + \nabla_{\bm{U}}\left\langle\bm{\Lambda}, \bm{F}\left(\bm{X}, \bm{U}\right)\right\rangle = \bm{0} \label{eq:adjoint_control}\\
    & \bm{D}_p^\mathrm{T}\bm{\Lambda} = \bm{0}\label{eq:adjoint_order_condition}
\end{align}
The final equation \eqref{eq:adjoint_order_condition} emerges specifically from the introduction of the additional degree of freedom $\bm{X}_p$ in our augmented formulation.

Following the same arrangement and approach as established in \citet{gargUnifiedFrameworkNumerical2010a}, we transform the adjoint system using the relationship between the Lagrange multipliers $\bm{\Lambda}$ and the costate approximation $\bm{\lambda}$ at the collocation points:
\begin{align}
    \bm{\lambda}_i = \bm{\Lambda}_i / w_i, \quad i = 1, 2, \dots, N
\end{align}
where $w_i$ represents the quadrature weight associated with the $i$-th LGL node. With this transformation, the adjoint system given by Equations~\eqref{eq:adjoint_boundary}--\eqref{eq:adjoint_order_condition} can be expressed in the following form:
\begin{align}
    &\bm{D}^\dagger\bm{\lambda}= -\nabla_{\bm{X}}\bm{H}\left(\bm{X}, \bm{\lambda}, \bm{U}\right) + \frac{\bm{e}_1}{w_1}\left(\bm{\mu}-\bm{\lambda}_1\right)+\frac{\bm{e}_N}{w_N}\left(\bm{\lambda}_N-\nabla_{\bm{X}} \Phi\left(\bm{X}_N\right)\right)\\
    & \nabla_{\bm{U}}\bm{H}\left(\bm{X}, \bm{\lambda}, \bm{U}\right) = \bm{0}\\
    & \bm{D}_p^\mathrm{T}\bm{W}\bm{\lambda} = \bm{0}\label{eq:adjoint_order_condition_transformed}
\end{align}
where $\bm{e}_1$ and $\bm{e}_N$ are the first and last columns of the $N \times N$ identity matrix, respectively, $\bm{W}$ is the diagonal matrix with the LGL weights as the diagonal elements, and $\bm{H}$ is the Hamiltonian function $\mathcal{H}$ evaluated at the LGL nodes. The matrix $\bm{D}^\dagger$ is defined as
\begin{align}
    \bm{D}^\dagger = -\bm{W}^{-1} \bm{D}^\mathrm{T} \bm{W} -\frac{\bm{e}_1 \bm{e}_1^\mathrm{T}}{w_1} + \frac{\bm{e}_N \bm{e}_N^\mathrm{T}}{w_N}
\end{align}
It has been shown by \citet{fahrooDiscreteTimeOptimalityConditions2006} that $\bm{D} = \bm{D}^\dagger$.

The additional constraint in equation \eqref{eq:adjoint_order_condition_transformed} has a specific mathematical interpretation, which is formally established by the following proposition.

\begin{prop}\label{prop:order_condition}
    For any vector $\bm{V} \in \mathbb{R}^N$, the condition $\bm{D}_p^\mathrm{T} \bm{W}\bm{V} = 0$ holds if and only if the interpolation polynomial $v(\tau) = \sum_{i=1}^{N} \bm{V}_i L_i(\tau)$ is a polynomial of degree at most $N-2$. Equivalently, this condition ensures that the highest-order coefficient of the interpolation polynomial vanishes.
\end{prop}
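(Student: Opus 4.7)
The plan is to identify each $\bm{V} \in \mathbb{R}^N$ with its unique nodal-interpolating polynomial $v(\tau) = \sum_{i=1}^N V_i L_i(\tau) \in \mathcal{P}_{N-1}$, the space of polynomials of degree at most $N-1$. Under this bijection, the proposition asserts that the kernel of the linear functional $\bm{V} \mapsto \bm{D}_p^{\mathrm{T}} \bm{W} \bm{V}$ corresponds exactly to the subspace $\mathcal{P}_{N-2}$. The starting observation I would exploit is the rewriting
\[ \bm{D}_p^{\mathrm{T}} \bm{W} \bm{V} \;=\; \sum_{i=1}^N w_i\,\dot{L}_p(\tau_i)\,v(\tau_i), \]
which is precisely the LGL quadrature applied to the product polynomial $\dot{L}_p(\tau)\,v(\tau)$.

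The main step is to show that $v \in \mathcal{P}_{N-2}$ forces this quantity to vanish. Because $\dot{L}_p \cdot v$ then has degree at most $2N-3$, the LGL quadrature is exact, and the functional equals $\int_{-1}^1 \dot{L}_p\,v\,\mathrm{d}\tau$. One integration by parts, whose boundary contributions vanish because $L_p(\pm 1) = 0$, reduces this to $-\int_{-1}^1 L_p\,\dot{v}\,\mathrm{d}\tau$. Next I would use that the interior LGL nodes coincide with the roots of $\dot{P}_{N-1}$, so $L_p(\tau) = c\,(\tau^2-1)\,\dot{P}_{N-1}(\tau)$ for some nonzero constant $c$. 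A second integration by parts, whose boundary terms again drop out thanks to the $(\tau^2-1)$ factor, recasts the integral as a nonzero scalar multiple of
\[ \int_{-1}^1 P_{N-1}(\tau)\,\bigl[\,2\tau\,\dot{v}(\tau) + (\tau^2-1)\,\ddot{v}(\tau)\,\bigr]\,\mathrm{d}\tau. \]
Since $\dot{v}$ has degree at most $N-3$ and $\ddot{v}$ has degree at most $N-4$, the bracketed polynomial lies in $\mathcal{P}_{N-2}$, and the integral vanishes by the standard orthogonality of $P_{N-1}$ against $\mathcal{P}_{N-2}$.

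For the converse I would invoke a dimension count rather than argue directly. The row vector $\bm{D}_p^{\mathrm{T}}\bm{W}$ is nonzero: its first entry equals $w_1\prod_{j=2}^N(\tau_1-\tau_j)$, a positive weight times a product of nonzero differences of distinct nodes. Thus the kernel of the functional has dimension exactly $N-1$. The set of $\bm{V}$ whose interpolants lie in $\mathcal{P}_{N-2}$ is the preimage of an $(N-1)$-dimensional subspace under the Lagrange-basis isomorphism, hence also $(N-1)$-dimensional. The one-way containment established above therefore promotes to equality, giving the biconditional. The equivalent statement about the vanishing leading coefficient is then immediate, since within $\mathcal{P}_{N-1}$ the degree drops to at most $N-2$ exactly when the coefficient of the top monomial (equivalently, the coefficient of $P_{N-1}$ in the Legendre expansion) is zero.

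The principal obstacle I anticipate is the bookkeeping in the double integration by parts: the degree bounds $\deg \dot{v} \leq N-3$ and $\deg \ddot{v} \leq N-4$ are exactly tight for pushing the remaining factor into $\mathcal{P}_{N-2}$, so the orthogonality relation $P_{N-1} \perp \mathcal{P}_{N-2}$ just barely applies. The supporting ingredients—the $2N-3$ exactness threshold for LGL quadrature, the factorization of $L_p$ through the roots of $\dot{P}_{N-1}$, and the concluding dimension argument—are essentially routine once the quadrature reinterpretation is in hand.
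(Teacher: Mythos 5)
Your proposal is correct, but it departs from the paper's proof in both directions of the equivalence. For the forward implication you share the paper's opening moves (LGL exactness up to degree $2N-3$ turns the sum into $\int_{-1}^1 v\,\dot{L}_p\,\mathrm{d}\tau$, and one integration by parts with boundary terms killed by $L_p(\pm 1)=0$), but you then finish by factoring $L_p(\tau)=c\,(\tau^2-1)\dot{P}_{N-1}(\tau)$, integrating by parts a second time, and invoking orthogonality of $P_{N-1}$ against $\mathcal{P}_{N-2}$; the degree bookkeeping ($\deg(2\tau\dot{v}+(\tau^2-1)\ddot{v})\le N-2$) is tight but correct. The paper instead observes that $\dot{v}\,L_p$ again has degree at most $2N-3$, applies the LGL quadrature rule a second time, and gets zero termwise because $L_p(\tau_i)=0$ at every node --- a shorter and more self-contained finish that needs no Legendre-polynomial structure at all. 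For the converse, you use a dimension count: the functional $\bm{V}\mapsto\bm{D}_p^{\mathrm{T}}\bm{W}\bm{V}$ is nonzero, so its kernel is $(N-1)$-dimensional and must coincide with the $(N-1)$-dimensional subspace already shown to lie inside it. The paper instead perturbs $\bm{V}$ to an explicit $\bm{V}'$ agreeing in the first $N-1$ entries whose interpolant has degree at most $N-2$, and deduces $\bm{V}=\bm{V}'$ from the non-vanishing of the last entry of $\bm{D}_p^{\mathrm{T}}\bm{W}$. Your abstract argument is arguably cleaner and makes explicit that only the non-triviality of the functional is needed; the paper's is more constructive. Both routes are sound, and your concluding remark identifying ``degree at most $N-2$'' with the vanishing of the leading (equivalently, $P_{N-1}$) coefficient is immediate.
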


\begin{proof}
    Suppose that the interpolation polynomial $v(\tau)$ is a polynomial of degree at most $N-2$. The product $v(\tau)\dot{L}_p(\tau)$ is a polynomial of degree at most $2N-3$, which can be integrated exactly by the LGL quadrature rule. Applying integration by parts, we obtain
    \begin{align}
        \bm{D}_p^\mathrm{T} \bm{W} \bm{V} &= \int_{-1}^1 v(\tau) \dot{L}_p(\tau)\,\mathrm{d}\tau\\
        &= \left. v(\tau) L_p(\tau)\right|_{-1}^1 - \int_{-1}^1 \dot{v}(\tau) L_p(\tau)\,\mathrm{d}\tau\\
        &= \left.v(\tau) L_p(\tau)\right|_{-1}^1 - \sum_{i=1}^N w_i \dot{v}(\tau_i) L_p(\tau_i)\\
        &= 0
    \end{align} 
    The final equality follows since $L_p(\tau_i) = 0$ for all LGL nodes by construction, which causes both terms to vanish.

    Conversely, assume $\bm{D}_p^\mathrm{T} \bm{W} \bm{V} = 0$. By the properties of Lagrange interpolation, there exists a unique vector $\bm{V}' \in \mathbb{R}^N$ such that $\bm{V}'_i = \bm{V}_i$ for $i=1,\ldots,N-1$, and the polynomial $v'(\tau) = \sum_{i=1}^{N} \bm{V}'_i L_i(\tau)$ is of degree at most $N-2$. From the first part of this proof, we know that $\bm{D}_p^\mathrm{T} \bm{W} \bm{V}' = 0$. 
    Therefore, $\bm{D}_p^\mathrm{T} \bm{W}(\bm{V} - \bm{V}') = 0$. Since neither the last element of the weight vector $\bm{w}$ nor the last element of $\bm{D}_p$ is zero, the last component of the vector $\bm{V} - \bm{V}'$ must be zero. Consequently, $\bm{V} = \bm{V}'$, which proves that $v(\tau)$ is indeed a polynomial of degree at most $N-2$.
\end{proof}

According to Proposition~\ref{prop:order_condition}, we can establish that the adjoint system derived from the discretized problem \eqref{eq:discretized_cost}--\eqref{eq:discretized_boundary} corresponds precisely to a Lobatto IIIB discontinuous collocation method for the costate vector $\bm{\lambda}$ in the continuous costate equations \eqref{eq:costate_boundary}--\eqref{eq:costate_control}~\cite{ernsthairerGeometricNumericalIntegration2006}. This correspondence demonstrates that our augmented formulation produces a mathematically consistent pseudospectral approximation to the dynamics of costate variables in the necessary conditions of optimality for the continuous optimal control problem. By introducing the additional degree of freedom, we effectively eliminate the null space present in the adjoint system of the standard LGL collocation method. This directly addresses the fundamental limitations previously identified by \citet{gargUnifiedFrameworkNumerical2010a} and establishes a more complete theoretical foundation for LGL-based pseudospectral methods in optimal control.

\section{Integral formulation using LGL collocation}\label{sec:integral_formulation}
\subsection{Equivalence between the augmented and integral formulations}
This section demonstrates the equivalence between the previously developed augmented LGL collocation method and the integral formulation of the LGL collocation method~\cite{gargUnifiedFrameworkNumerical2010a}. The key distinction is that in the integral formulation, the additional degree of freedom becomes implicit rather than explicit.

Leveraging Proposition~\ref{prop:inverse}, we can reformulate the discretized dynamics equation \eqref{eq:discretized_dynamics} into an equivalent integral form. By defining $\tilde{\bm{A}} = \tilde{\bm{D}}_{2:N+1}^{-1}$ and applying the inverse operation to equation \eqref{eq:discretized_dynamics}, we obtain:
\begin{align}
    \tilde{\bm{A}} \bm{F}(\bm{X}, \bm{U}) = \begin{bmatrix}
        -\bm{1}_{N-1}\\
        0
    \end{bmatrix} \bm{X}_1 + \tilde{\bm{X}}_{2:N+1}\label{eq:integral_dynamics}
\end{align}
Equation~\eqref{eq:integral_dynamics} can be decomposed into two components:
\begin{align}
    \bm{X}_{2:N} &= \bm{1}_{N-1} \bm{X}_1 + \bm{A} \bm{F}(\bm{X}, \bm{U})\label{eq:integral_dynamics_split1}\\
    \bm{X}_p &= \bm{A}_p \bm{F}(\bm{X}, \bm{U})\label{eq:integral_dynamics_split2}
\end{align}
where $\bm{A}$ comprises the first $N-1$ rows of $\tilde{\bm{A}}$, and $\bm{A}_p$ represents its final row. This decomposition reveals how the state variables and the augmentation coefficient are individually determined through the integration process. For consistency with the indexing throughout the paper, we denote $\bm{A}_i$ as the $(i-1)$th row of the matrix $\bm{A}$, where $i = 2, \ldots, N$.

The relationship between the augmented LGL collocation method and the integral formulation of the LGL collocation method is established through the following proposition.

\begin{prop}\label{prop:equivalence}
    The matrices $\bm{A}$ and $\bm{A}_p$ in equations~\eqref{eq:integral_dynamics_split1} and \eqref{eq:integral_dynamics_split2}, respectively, have the following properties:
    \begin{enumerate}[label=(\roman*)]
    \item The integral formulation of the LGL collocation method is mathematically equivalent to equation~\eqref{eq:integral_dynamics_split1}. Specifically, the following equation holds
    \begin{align}
        \bm{A}_{i, j} = \int_{-1}^{\tau_i} L_j(\tau) \,\mathrm{d}\tau\label{eq:integral_matrix}
    \end{align}
    for $i = 2, \dots, N$ and $j = 1, \dots, N$, where $\bm{A}_{i,j}$ is the $(i,j)$-th element of the matrix $\bm{A}$. 
    
    \item Furthermore,
    \begin{align}
        \left(\bm{A}_p\right)_i = \frac 1N \prod_{j=1, j\neq i}^{N}\frac{1}{\tau_i - \tau_j}\label{eq:integral_vector}
    \end{align}
    for $i = 1, \dots, N$, where $\left(\bm{A}_p\right)_i$ is the $i$-th element of the vector $\bm{A}_p$.
    \end{enumerate}
\end{prop}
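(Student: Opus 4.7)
The plan is to derive both identities from a single observation: the matrix $\tilde{\bm{A}} = \tilde{\bm{D}}_{2:N+1}^{-1}$ is characterized by solving the discretized dynamics for $\bm{X}_{2:N}$ and $\bm{X}_p$ given $\bm{F}$ and $\bm{X}_1$, so I can determine its rows by working directly with the interpolating polynomial $\bm{x}(\tau)$ from \eqref{eq:interpolation}.

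First I would observe that since $\bm{x}(\tau)$ is a polynomial of degree at most $N$ (because $L_i$ has degree $N-1$ and $L_p$ has degree $N$), its derivative $\dot{\bm{x}}(\tau)$ is a polynomial of degree at most $N-1$. The collocation condition $\dot{\bm{x}}(\tau_k)=\bm{F}_k$ at the $N$ LGL nodes therefore uniquely determines $\dot{\bm{x}}$ via standard Lagrange interpolation:
\begin{equation*}
  \dot{\bm{x}}(\tau) = \sum_{k=1}^N \bm{F}_k L_k(\tau).
\end{equation*}
This is the key structural fact; everything else follows from integrating and from reading off coefficients.

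For part (i), I would integrate the displayed identity from $-1$ to $\tau_i$ for each $i=2,\ldots,N$, obtaining
\begin{equation*}
  \bm{X}_i - \bm{X}_1 = \sum_{k=1}^N \bm{F}_k \int_{-1}^{\tau_i} L_k(\tau)\,\mathrm{d}\tau.
\end{equation*}
Comparing with the decomposition $\bm{X}_{2:N} = \bm{1}_{N-1}\bm{X}_1 + \bm{A}\bm{F}$ in \eqref{eq:integral_dynamics_split1}, and invoking the uniqueness of $\tilde{\bm{A}}$ guaranteed by Proposition~\ref{prop:inverse}(i), I conclude $\bm{A}_{i,j} = \int_{-1}^{\tau_i} L_j(\tau)\,\mathrm{d}\tau$.

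For part (ii), I would read off the leading (i.e.\ degree-$N$) coefficient of $\bm{x}(\tau)$ in two ways. On one hand, $\bm{x}(\tau)=\sum_i \bm{X}_i L_i(\tau)+\bm{X}_p L_p(\tau)$, and only the $L_p$ term has degree $N$ with leading coefficient $1$, so the coefficient of $\tau^N$ in $\bm{x}$ equals $\bm{X}_p$. On the other hand, integrating $\dot{\bm{x}}(\tau) = \sum_k \bm{F}_k L_k(\tau)$ and using that each $L_k$ has leading coefficient $\prod_{j\neq k}(\tau_k-\tau_j)^{-1}$ in degree $N-1$, the coefficient of $\tau^N$ in the antiderivative is $\tfrac{1}{N}\sum_k \bm{F}_k \prod_{j\neq k}(\tau_k-\tau_j)^{-1}$. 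Equating these two expressions for the leading coefficient and matching against $\bm{X}_p = \bm{A}_p \bm{F}$ yields \eqref{eq:integral_vector}.

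The only genuine obstacle is justifying that the two integral relations I derive from the polynomial picture indeed coincide with the rows of $\tilde{\bm{D}}_{2:N+1}^{-1}$; this is handled cleanly by appealing to the nonsingularity in Proposition~\ref{prop:inverse}(i), which guarantees uniqueness of the matrix recovering $(\bm{X}_{2:N},\bm{X}_p)$ from $\bm{F}$ and $\bm{X}_1$. Everything else is essentially a coefficient-matching exercise, so I would expect the write-up to be short once the polynomial-degree observation is in place.
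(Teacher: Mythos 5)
Your proof is correct and rests on the same key fact as the paper's: the derivative of the augmented degree-$N$ interpolant is the degree-$(N-1)$ Lagrange interpolant of the nodal derivative data, so integrating $\sum_k \bm{F}_k L_k(\tau)$ from $-1$ recovers both the nodal values (part (i)) and the degree-$N$ coefficient (part (ii)). The paper runs the identical computation in the reverse direction --- it pads your candidate matrix with a zero row and verifies directly that $\tilde{\bm{D}}$ times the stacked matrix $\bigl(\bm{0}_N;\,\bm{A};\,\bm{A}_p\bigr)$ equals $\bm{I}_N$ --- but both arguments reduce to $\tfrac{\mathrm{d}}{\mathrm{d}\tau}\int_{-1}^{\tau}L_j(t)\,\mathrm{d}t = L_j(\tau)$ evaluated at the nodes together with the invertibility from Proposition~\ref{prop:inverse}(i), so this is essentially the same proof.
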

\begin{proof}
    The value of $\left(\bm{A}_p\right)_i$ corresponds to the coefficient of the $x^N$ term in the polynomial resulting from the integral $\int L_i(\tau)\,\mathrm d\tau$. Using the values of $\bm{A}$ in equation~\eqref{eq:integral_matrix} and $\bm{A}_p$ in equation~\eqref{eq:integral_vector}, we can directly verify that
    \begin{align}
        \tilde{\bm{D}}\begin{bmatrix}
            \bm{0}_N\\
            \bm{A}\\
            \bm{A}_p
        \end{bmatrix} = \bm{I}_N
    \end{align}
    which is equivalent to
    \begin{align}
        \tilde{\bm{A}} = \begin{bmatrix}
            \bm{A}\\
            \bm{A}_p
        \end{bmatrix} = \tilde{\bm{D}}_{2:N+1}^{-1}
    \end{align}
\end{proof}

A notable advantage of the integral formulation of the LGL collocation method is that the auxiliary variables $\bm{X}_p$ are fully determined by equation~\eqref{eq:integral_dynamics_split2} and do not influence any other equation in the system. Consequently, $\bm{X}_p$ need not be explicitly represented as variables in the corresponding NLP problem. This implicit representation effectively reduces the dimensionality of the optimization problem without sacrificing computational accuracy, providing computational advantages that make the LGL approach particularly efficient for practical implementations, as will be further discussed in Section~\ref{sec:comp_n}.

\subsection{Adjoint system in integral form}
In this section, we derive the adjoint system for the integral formulation of the LGL collocation method. The following derivation demonstrates that this adjoint system still corresponds to a Lobatto IIIB discontinuous collocation method for the costate vector $\bm{\lambda}$ with respect to the continuous costate equations \eqref{eq:costate_boundary}--\eqref{eq:costate_control}~\cite{ernsthairerGeometricNumericalIntegration2006}. 

The Lagrangian function for the integral formulation of the LGL collocation method is given by
\begin{equation}
        \mathcal{L} = \Phi\left(\bm{X}_N\right) + \langle \bm{w}, \bm{G}(\bm{X}, \bm{U})\rangle + \langle \bm{\mu}, \bm{x}_0 - \bm{X}_1\rangle     + \langle \bm{R}, \bm{A}\bm{F}(\bm{X}, \bm{U}) -\bm{X}_{2: N} + \bm{1}_{N-1} \bm{X}_1\rangle
\end{equation}
where $\bm{R} \in \mathbb{R}^{(N-1) \times n}$ represents the matrix of Lagrange multipliers associated with the discretized dynamics constraints in \eqref{eq:integral_dynamics_split1}, and $\bm{\mu} \in \mathbb{R}^{n}$ is the Lagrange multiplier for the initial boundary condition constraint. For consistency with the notation throughout this paper, we denote $\bm{R}_i$ as the $(i-1)$-th row of the matrix $\bm{R}$, where $i = 2, \ldots, N$.

Applying the KKT conditions to the Lagrangian function, we obtain the following system of equations for the adjoint system
\begin{align}
    & \nabla_{\bm{X}}\left\langle \bm{w}_1, \bm{G}\left(\bm{X}_1, \bm{U}_1\right)\right\rangle  + \nabla_{\bm{X}}\left\langle\bm{A}^{\mathrm{T}}_1\bm{R}, \bm{F}\left(\bm{X}_1, \bm{U}_1\right)\right\rangle = - \bm{1}_{N-1}^\mathrm{T}\bm{R} + \bm{\mu}\label{eq:adjoint_boundary_integral}\\
    & \nabla_{\bm{X}}\left\langle \bm{w}_{2:N-1}, \bm{G}\left(\bm{X}_{2:N-1}, \bm{U}_{2:N-1}\right)\right\rangle + \nabla_{\bm{X}}\left\langle \bm{A}^{\mathrm{T}}_{2:N-1}\bm{R}, \bm{F}\left(\bm{X}_{2:N-1}, \bm{U}_{2:N-1}\right)\right\rangle = \bm{R}_{2:N-1}\\
    & \nabla_{\bm{X}}\left\langle \bm{w}_N, \bm{G}\left(\bm{X}_N, \bm{U}_N\right)\right\rangle+ \nabla_{\bm{X}}\left\langle \bm{A}^{\mathrm{T}}_N\bm{R}, \bm{F}\left(\bm{X}_N, \bm{U}_N\right)\right\rangle = \bm{R}_N  - \nabla_{\bm{X}} \Phi\left(\bm{X}_N\right) \\
    & \nabla_{\bm{U}}\left\langle \bm{w}, \bm{G}\left(\bm{X}, \bm{U}\right)\right\rangle + \nabla_{\bm{U}}\left\langle\bm{A}^{\mathrm{T}}\bm{R}, \bm{F}\left(\bm{X}, \bm{U}\right)\right\rangle = \bm{0} \label{eq:adjoint_control_integral}
\end{align}

Similarly to the differential form, set
\begin{align}
    \bm{r}_{i} &= \bm{R}_i / w_i, \quad i = 2, \dots, N
\end{align}
where $w_i$ is the quadrature weight associated with the $i$-th LGL node. The adjoint system can then be expressed in the following form:
\begin{align}
    & \nabla_{\bm{X}}\left\langle \bm{1}_{N}, \bm{G}\left(\bm{X}, \bm{U}\right)\right\rangle + \nabla_{\bm{X}}\left\langle\bm{A}^\dagger \bm{r}, \bm{F}\left(\bm{X}, \bm{U}\right)\right\rangle = \begin{bmatrix}
        -\bm{w}_{2:N}^\mathrm{T} /w_1 \\
        \bm{I}_{N-1} 
    \end{bmatrix} \bm{r} + \frac{\bm{e}_1}{w_1}\bm{\mu} - \frac{\bm{e}_N}{w_N}\nabla_{\bm{X}} \Phi\left(\bm{X}_N\right)\label{eq:adjoint_boundary_integral_sub}\\
    & \nabla_{\bm{U}}\left\langle\bm{1}_{N}, \bm{G}\left(\bm{X}, \bm{U}\right)\right\rangle + \nabla_{\bm{U}}\left\langle\bm{A}^\dagger\bm{r}, \bm{F}\left(\bm{X}, \bm{U}\right)\right\rangle = \bm{0} \label{eq:adjoint_control_integral_sub}
\end{align}
where $\bm{w}_{2:N}$ is the vector of weights associated with the LGL nodes from $2$ to $N$, and $\bm{1}_{N}$ is the column vector of ones of dimension $N$. The matrix $\bm{A}^\dagger$ is defined as
\begin{align}
    \bm{A}^\dagger = \bm{W}^{-1} \bm{A}^\mathrm{T} \bm{W}_{2:N}
\end{align}
where $\bm{W}_{2:N}$ is the diagonal matrix with the LGL weights from indices $2$ to $N$ as its diagonal elements.

\begin{prop}\label{prop:adjoint_integral_matrix}
    The adjoint integral matrix $\bm{A}^\dagger$ can be viewed as an integration matrix with respect to interpolation polynomials $M_i(\tau)$ defined for the second to the $(N-1)$-th LGL nodes, with the integration constant controlled by the value of the variable at the last LGL node. Specifically, the elements of the adjoint integral matrix $\bm{A}^\dagger$ are given by
    \begin{align}
        \bm{A}^\dagger_{i, j} &= \int_{\tau_i}^1 M_j(\tau) \,\mathrm{d}\tau - w_N M_j(1),\quad i = 1, \ldots, N, \quad j = 2, \ldots, N-1\label{eq:adjoint_integral_matrix_body}\\
        \bm{A}^\dagger_{i, N} &= w_N, \quad i = 1, \ldots, N\label{eq:adjoint_integral_matrix_tail}
    \end{align}
    where $M_i(\tau), i = 2, \dots, N-1$ are the $(N-3)$-th degree polynomials satisfying $M_i(\tau_j) = \delta_{ij}$, and $\delta_{ij}$ is the Kronecker delta function.
\end{prop}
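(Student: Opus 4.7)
My plan is to verify the two stated formulas column by column, starting from the representation $\bm{A}^\dagger_{i,j} = w_i^{-1} w_j \int_{-1}^{\tau_j} L_i(\tau)\,\mathrm{d}\tau$ which follows directly from $\bm{A}^\dagger = \bm{W}^{-1}\bm{A}^{\mathrm{T}}\bm{W}_{2:N}$ together with Proposition~\ref{prop:equivalence}(i). The last column $j = N$ is immediate: the integration range extends to $[-1,1]$, and since $L_i$ has degree $N-1$, LGL exactness (up to degree $2N-3$) yields $\int_{-1}^1 L_i\,\mathrm{d}\tau = \sum_k w_k L_i(\tau_k) = w_i$, which gives $\bm{A}^\dagger_{i,N} = w_N$ as claimed in \eqref{eq:adjoint_integral_matrix_tail}.

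For the interior columns $j \in \{2,\ldots,N-1\}$, since $\{M_j\}_{j=2}^{N-1}$ is a basis for polynomials of degree at most $N-3$, it suffices to establish the aggregated identity
\[\sum_{j=2}^{N-1} \bm{A}^\dagger_{i,j}\, p(\tau_j) \;=\; \int_{\tau_i}^1 p(\tau)\,\mathrm{d}\tau - w_N\, p(1)\]
for every polynomial $p$ with $\deg p \le N-3$; specializing to $p = M_k$ (for which $p(\tau_j) = \delta_{jk}$) then recovers \eqref{eq:adjoint_integral_matrix_body} for $j = k$.

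To prove the aggregated identity, I would introduce the antiderivative $I(\tau) = \int_{-1}^\tau L_i(s)\,\mathrm{d}s$, a polynomial of degree $N$ satisfying $I(-1) = 0$ and $I(1) = w_i$ (by the same exactness argument as above). Pad the sum to a full LGL quadrature: the $j = 1$ term drops because $I(-1) = 0$, while the $j = N$ term contributes $w_N\, p(1)\, w_i$. Since $pI$ has degree at most $2N-3$, LGL exactness applies and yields
\[\sum_{j=2}^{N-1} w_j\, p(\tau_j)\, I(\tau_j) \;=\; \int_{-1}^1 p(\tau) I(\tau)\,\mathrm{d}\tau - w_i\, w_N\, p(1).\]
Now integrate by parts against the second antiderivative $P(\tau) = \int_\tau^1 p(s)\,\mathrm{d}s$, so that $P' = -p$, $P(1) = 0$, and $\deg P \le N-2$: the boundary term $[PI]_{-1}^1$ vanishes because $P(1) = 0$ and $I(-1) = 0$, leaving $\int_{-1}^1 pI\,\mathrm{d}\tau = \int_{-1}^1 P L_i\,\mathrm{d}\tau$. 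A final LGL-exactness application to $PL_i$ (degree at most $2N-3$) gives $\int_{-1}^1 PL_i\,\mathrm{d}\tau = w_i\, P(\tau_i) = w_i \int_{\tau_i}^1 p(s)\,\mathrm{d}s$. Dividing through by $w_i$ produces exactly the required identity.

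The main obstacle is coordinating the two antiderivative conventions: $I$ must be anchored at $-1$ and $P$ at $+1$ so that the padded-quadrature boundary term at $j=1$ and the integration-by-parts boundary pair $[PI]_{-1}^1$ both collapse, while the residual $j = N$ padding term survives and produces precisely the $-w_N p(1)$ correction on the right-hand side. Once these anchors are fixed, the argument is a tight chain of two LGL-exactness applications (both relying on the sharp degree bound $2N-3$) sandwiching a single integration by parts, with no further calculation needed.
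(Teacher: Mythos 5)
Your proof is correct and is essentially the paper's argument: both rest on two applications of LGL exactness at the sharp degree $2N-3$ wrapped around a swap of the order of integration, your explicit integration by parts against $P(\tau)=\int_\tau^1 p(s)\,\mathrm{d}s$ being the same step the paper performs via the Fubini identity $\int_{-1}^1 M_j(\tau)\int_{-1}^{\tau}L_i(t)\,\mathrm{d}t\,\mathrm{d}\tau=\int_{-1}^1 L_i(t)\int_{t}^1 M_j(\tau)\,\mathrm{d}\tau\,\mathrm{d}t$, and your padding of the truncated sum being the paper's evaluation of the left-hand quadrature at the endpoint nodes. Phrasing the result as an aggregated identity over all $p$ with $\deg p\le N-3$ and then specializing to $p=M_k$ is only a presentational variation, and your degree bookkeeping is a useful clarification of the paper's terse statement about the integrand's degree.
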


\begin{proof}
    For $j = 2, \dots, N-1$, we have
    \begin{align}
        \int_{-1}^1 M_j(\tau) \int_{-1}^{\tau}L_i(t)\,\mathrm{d}t \,\mathrm{d}\tau = \int_{-1}^1 L_i(t) \int_{t}^1 M_j(\tau) \,\mathrm{d}\tau \,\mathrm{d}t
    \end{align}
    The integrand of the outer integral is a polynomial of degree at most $N-3$, which can be integrated exactly by the LGL quadrature rule. Therefore
    \begin{align}
        w_j A_{j, i} + w_N M_j(1) w_i = w_i \int_{\tau_i}^1 M_j(\tau)\,\mathrm{d}\tau
    \end{align}
    which is equivalent to equation~\eqref{eq:adjoint_integral_matrix_body}.
    
    The last equation \eqref{eq:adjoint_integral_matrix_tail} follows directly since $A_{N, i} = w_i$ for $i = 1, \ldots, N$.
\end{proof}

Given that the costate variable $\bm{\lambda}$ is discretized and parameterized by the product $\bm{A}^\dagger \bm{r}$, we can directly verify that the adjoint system given by equations~\eqref{eq:adjoint_boundary_integral_sub}--\eqref{eq:adjoint_control_integral_sub} corresponds to a Lobatto IIIB discontinuous collocation method for the costate vector $\bm{\lambda}$~\cite{ernsthairerGeometricNumericalIntegration2006}, as formalized in the following proposition.

\begin{prop}\label{prop:adjoint_integral_matrix_order}
    The adjoint system given by equations~\eqref{eq:adjoint_boundary_integral_sub}--\eqref{eq:adjoint_control_integral_sub} represents a Lobatto IIIB discontinuous collocation method for the costate vector $\bm{\lambda}$ with respect to the continuous costate equations \eqref{eq:costate_boundary}--\eqref{eq:costate_control}.
\end{prop}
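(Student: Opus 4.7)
The plan is to exhibit an explicit polynomial representation of the discrete costate and verify by direct substitution that this representation satisfies the three defining conditions of an $N$-stage Lobatto IIIB discontinuous collocation scheme for the costate equation~\eqref{eq:costate_boundary}--\eqref{eq:costate_control}: a stage polynomial of degree $N-2$, collocation of the ODE at the $N-2$ interior Lobatto nodes, and jump conditions with weights $w_1$ and $w_N$ linking the polynomial to the external boundary data. Proposition~\ref{prop:adjoint_integral_matrix} makes this verification essentially algebraic.

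Setting $p(\tau) := \sum_{j=2}^{N-1} \bm{r}_j M_j(\tau)$ and
\[
    \Lambda(\tau) := \int_\tau^1 p(s)\,\mathrm{d}s - w_N p(1) + w_N \bm{r}_N,
\]
I would first read off directly from the form of $\bm{A}^\dagger$ in Proposition~\ref{prop:adjoint_integral_matrix} that $\Lambda(\tau_i) = (\bm{A}^\dagger \bm{r})_i = \bm\lambda_i$ for every LGL node. Thus $\Lambda$ is a polynomial of degree $N-2$ carrying the discrete costate. Differentiating gives $\dot\Lambda(\tau) = -p(\tau)$, and since $p$ is the interior-Lagrange interpolant of the $\bm{r}_j$, we have $\dot\Lambda(\tau_i) = -\bm{r}_i$ for $i = 2,\ldots,N-1$. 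The interior rows of \eqref{eq:adjoint_boundary_integral_sub} collapse to $\bm{r}_i = \nabla_{\bm{x}}\mathcal{H}(\bm{X}_i, \bm\lambda_i, \bm{U}_i)$, yielding the interior collocation relation
\[
    \dot\Lambda(\tau_i) = -\nabla_{\bm{x}}\mathcal{H}(\bm{X}_i, \Lambda(\tau_i), \bm{U}_i), \qquad i = 2, \ldots, N-1,
\]
while~\eqref{eq:adjoint_control_integral_sub} is the pointwise Hamiltonian optimality condition~\eqref{eq:costate_control}.

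The remaining work is the two endpoint jump conditions. At $\tau_N = 1$, I evaluate $\Lambda(1) = w_N(\bm{r}_N - p(1))$ and $\dot\Lambda(1) = -p(1)$; substituting into the $N$-th row of~\eqref{eq:adjoint_boundary_integral_sub} eliminates $\bm{r}_N$ and $p(1)$ to leave
\[
    \bm\lambda_N - \nabla_{\bm{x}}\Phi(\bm{X}_N) = -w_N\bigl(\dot\Lambda(1) + \nabla_{\bm{x}}\mathcal{H}(\bm{X}_N, \bm\lambda_N, \bm{U}_N)\bigr),
\]
the right-endpoint Lobatto IIIB jump with external terminal datum $\nabla_{\bm{x}}\Phi(\bm{X}_N)$. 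At $\tau_1 = -1$, I apply LGL quadrature to $\int_{-1}^1 p(s)\,\mathrm{d}s$, which is exact since $\deg p = N-3 \le 2N-3$, to obtain $\Lambda(-1) = w_1 p(-1) + \bm{w}_{2:N}^{\mathrm{T}}\bm{r}$; combining this with $\dot\Lambda(-1) = -p(-1)$ and the first row of~\eqref{eq:adjoint_boundary_integral_sub} yields the symmetric relation
\[
    \bm\lambda_1 - \bm\mu = -w_1\bigl(\dot\Lambda(-1) + \nabla_{\bm{x}}\mathcal{H}(\bm{X}_1, \bm\lambda_1, \bm{U}_1)\bigr),
\]
with $\bm\mu$ playing the role of the external initial costate produced by the Lobatto IIIB step.

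The step I expect to be the main obstacle is the left-endpoint manipulation: the LGL quadrature identity for $\int_{-1}^1 p\,\mathrm{d}s$ must be combined with the $-\bm{w}_{2:N}^{\mathrm{T}}\bm{r}/w_1$ term in the first adjoint row so that the contributions at $\tau_N = 1$ (namely the two $\pm w_N p(1)$ terms and the appearance of $w_N \bm{r}_N$) cancel exactly, leaving a clean factor of $w_1$ on the right with the correct sign. Once this sign- and weight-bookkeeping is carried out, the three verified identities coincide with the defining equations of the Lobatto IIIB discontinuous collocation method in the form of~\citet{ernsthairerGeometricNumericalIntegration2006}, completing the proof.
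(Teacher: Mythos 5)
Your proposal is correct and follows essentially the same route as the paper's proof: exhibit the degree-$(N-2)$ stage polynomial determined by $\bm{A}^\dagger\bm{r}$ (your $\Lambda$ coincides with the paper's Lagrange interpolant $\lambda$), check the interior collocation conditions $\dot\Lambda(\tau_i)=-\bm{r}_i$, and verify the two weighted jump conditions at $\tau=\pm1$, with the left-endpoint bookkeeping resting on exactness of LGL quadrature for $p$ exactly as in the paper's expansion of $\bm{A}^\dagger_{1,j}$. The only difference is presentational: by writing the stage polynomial directly as an antiderivative of $-p$ you obtain the endpoint derivative identities for free, whereas the paper states them as consequences of Proposition~\ref{prop:adjoint_integral_matrix}.
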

\begin{proof}
    Firstly, for any vector $\bm{r} \in \mathbb{R}^{N-1}$, consider the interpolation polynomial defined by
        \begin{align}
            \lambda(\tau) = \sum_{i=1}^{N} (\bm{A}^\dagger_i\bm{r}) L_i(\tau)
        \end{align}
        where $\bm{A}^\dagger_i$ is the $i$-th row of the matrix $\bm{A}^\dagger$. This polynomial is of degree at most $N-2$,
        a property that follows directly from Proposition~\ref{prop:adjoint_integral_matrix}, as the polynomials $M_i(\tau)$ are of degree $N-3$.

        Furthermore, the derivative of the polynomial $\lambda(\tau)$ at the LGL nodes can be expressed as
        \begin{align}
            \dot{\lambda}(\tau_1) &= -\sum_{i=2}^{N-1} M_i(-1) \bm{r}_i\label{eq:integral_parametrization_first}\\
            \dot{\lambda}(\tau_i) &= -\bm{r}_i, \quad i = 2, \ldots, N-1\label{eq:integral_parametrization_middle}\\
            \dot{\lambda}(\tau_N) &= -\sum_{i=2}^{N-1} M_i(1) \bm{r}_i\label{eq:integral_parametrization_last}
        \end{align}
        By comparing equation~\eqref{eq:integral_parametrization_middle} with equations~\eqref{eq:costate_dynamics} and \eqref{eq:adjoint_boundary_integral_sub}, we can verify that the collocation conditions at the interior LGL nodes (from $i=2$ to $i=N-1$) are satisfied by the adjoint system as required.

        For the first and last LGL nodes, the discontinuous collocation requires 
        \begin{align}
            \nabla_{\bm{X}}\mathcal{H}_1 &= -\dot{\lambda}(-1) + \frac{\mu - \lambda(-1)}{w_1} \label{eq:integral_collocation_first}\\
            \nabla_{\bm{X}}\mathcal{H}_N &= -\dot{\lambda}(1) + \frac{\lambda(1) - \nabla_{\bm{X}} \Phi}{w_N}\label{eq:integral_collocation_last}
        \end{align}
        The right-hand side of equation~\eqref{eq:integral_collocation_first} equals
        \begin{align}
            -\dot{\lambda}(-1) + \frac{\mu - \lambda(-1)}{w_1} 
            =& -\sum_{i=2}^{N-1} M_i(-1) \bm{r}_i + \frac{\mu - \sum_{i=2}^{N-1}\left(w_1M_i(-1)+w_i\right)\bm{r}_i-w_N\bm{r}_N}{w_1}\\
            = &-\sum_{i=2}^{N} \frac{w_i}{w_1} \bm{r}_i + \frac{\mu}{w_1} 
        \end{align}
        while the right-hand side of equation~\eqref{eq:integral_collocation_last} equals
        \begin{align}
            -\dot{\lambda}(1) + \frac{\lambda(1) - \nabla_{\bm{X}} \Phi}{w_N} 
            &= -\sum_{i=2}^{N-1} M_i(1) \bm{r}_i + \frac{-\sum_{i=2}^{N-1}w_NM_i(1)\bm{r}_i +w_N\bm{r}_N - \nabla_{\bm{X}} \Phi}{w_N}\\
            &= \bm{r}_N - \frac{\nabla_{\bm{X}} \Phi}{w_N}
        \end{align}
        These results exactly match the corresponding terms in equation~\eqref{eq:adjoint_boundary_integral_sub}, completing the proof.
\end{proof}

\section{Comparative Analysis of LG, LGR, and LGL Collocation Methods}\label{sec:comparison}

This section presents a comparison between the integral formulation of the LGL collocation method and the established LG and LGR collocation methods in a multiple-subinterval setting, highlighting the computational efficiency advantages of the LGL approach in practical applications.

For complex optimal control problems, the temporal domain is typically partitioned into $M$ subintervals, each implementing a pseudospectral collocation method to enhance the numerical performance and solution accuracy. This partitioning strategy is essential for several reasons:
\begin{enumerate}[label=(\roman*)]
    \item \textbf{Numerical conditioning:} Excessively high-degree polynomial approximations are prone to ill-conditioning phenomena. Domain decomposition effectively mitigates these issues by limiting the polynomial degree within each subinterval while maintaining overall solution accuracy for long-horizon problems. 
    \item \textbf{Representation of non-smooth solutions:} Many practical optimal control problems involve discontinuous control inputs. In such cases, global high-degree polynomial approximations suffer from the Gibbs phenomenon, resulting in oscillations near discontinuities. Localized lower-degree approximations across multiple subintervals provide superior representation of these non-smooth features~\cite{arribasOptimizationPathConstrainedSystems2015}.
    \item \textbf{Adaptive refinement:} Domain decomposition facilitates adaptive mesh refinement strategies, allowing computational resources to be concentrated in regions with complex dynamics or rapid state transitions~\cite{doi:10.2514/1.52136,darbyHpadaptivePseudospectralMethod2011}.
\end{enumerate}

Although this paper derives the covector mapping for a single subinterval, the results extend naturally to multiple subintervals. The differential and integral matrices for the multiple-subinterval case are constructed by placing the single-subinterval matrices along the diagonal of the overall system matrix. The theoretical properties and convergence results established for the single-subinterval case remain valid for the multiple-subinterval formulation.

\subsection{Problem Dimension}\label{sec:comp_n}

Assuming $N$ collocation points are used uniformly across $M$ subintervals, and that each collocation and non-collocation variable corresponds to an independent variable in the NLP problem, Table~\ref{tab:comp_n} summarizes the total number of variables for one state variable across the LG, LGR, and LGL methods. The number of collocation variables corresponds to both the number of dynamics function evaluations and the number of equality constraints needed to enforce the dynamics in the discretized problem. The number of non-collocation variables directly influences the total variable count in the NLP problem, which significantly impacts computational efficiency during optimization.

As shown in the table, the LGL method demonstrates advantages over both the LG and LGR methods in terms of collocation variables and total variable count. Regarding collocation variables, the LGL method's inclusion of two-sided boundary points allows these points to be shared between adjacent subintervals, thereby reducing the overall number of collocation variables. Regarding non-collocation variables, as established in the previous section, the auxiliary variable $\bm{X}_p$ can be made implicit and eliminated from the NLP problem, resulting in zero non-collocation variables for the LGL method. These advantages in both collocation and non-collocation variables combine to yield a substantial reduction in the total number of variables in the NLP problem compared to the other methods.

It is worth noting that the integration order of the LGL method is $2N-3$, while the LG and LGR methods achieve $2N-1$ and $2N-2$, respectively. To account for this difference, one could increase the number of collocation points in the LGL method to $N+1$, making it comparable to the LG method and superior to the LGR method in terms of integration order. With this adjustment, the collocation variable counts across all three methods become similar. Nevertheless, the LGL method still maintains its advantage over the LG method regarding non-collocation variables. Compared to the LGR method, the LGL approach maintains a similar number of non-collocation variables while delivering a higher integration order with minimal additional computational cost. In contrast to previous conclusions in the literature, we find that the LGL method offers a remarkable advantage in terms of the total number of variables in the NLP proble, which is a critical factor for computational efficiency during the solution process.

\begin{table}
    \centering
    \caption{Comparison of variable counts in the NLP problem across collocation methods. \#C: number of collocation variables, \#NC: number of non-collocation variables, \#ALL: total number of variables.}
    \begin{tabular}{cccc}
        \toprule
        Method & LG & LGR & LGL \\
        \midrule
        \#C & $MN$ & $MN$ & $M(N-1)+1$ \\
        \#NC & $M+1$ & $1$ & $0$ \\
        \#ALL & $M(N+1)+1$ & $MN+1$ & $M(N-1)+1$\\
        \bottomrule
    \end{tabular}\label{tab:comp_n}
\end{table}

\subsection{Symplecticity}\label{sec:comp_s}
It is well known that the first-order optimality conditions of optimal control problems for the state and costate variables satisfy the Hamiltonian canonical equations
\begin{align}
    \dot{\bm{x}} = \nabla_{\bm{\lambda}} \mathcal{H}, \quad \dot{\bm{\lambda}} = -\nabla_{\bm{x}} \mathcal{H}
\end{align}
As a result, for problems with a long time horizon, it is beneficial to employ symplectic integrators to preserve the Hamiltonian structure of the system. The analysis in sections~\ref{sec:augmented_formulation} and \ref{sec:integral_formulation} demonstrates that when the state variable is discretized using the augmented or integral LGL collocation method, the adjoint system for the costate variable is discretized using a Lobatto IIIB discontinuous collocation method. Since the Lobatto IIIA-IIIB pair is a symplectic integrator~\cite{ernsthairerGeometricNumericalIntegration2006}, the discretized NLP problem automatically forms a symplectic integrator for the Hamiltonian system. The situation is the same for the LG collocation method, where the adjoint system is also Gauss collocation. However, to the best of our knowledge, there is no evidence that the LGR discretization method for the state variable and its adjoint system for the costate variable is symplectic. Numerical experiments also indicate that the Hamiltonian value is not preserved for the LGR method, as shown in the example in section~\ref{sec:example2}. Within the context of preserving the symplectic structure, the LGL method is superior to the LGR method, which plays a crucial role in long-horizon trajectory optimization problems. A concrete example is presented in section~\ref{sec:example2}.

\section{Numerical Examples}\label{sec:examples}
\subsection{Simple Example Testing the Convergence of the Augmented Collocation Method}\label{sec:example1}
To demonstrate the effectiveness of the proposed augmented LGL collocation method, we reproduce the numerical experiment originally presented in \citet{gargUnifiedFrameworkNumerical2010a}. The optimal control problem is formulated as
\begin{align}
    \min_{y, u} & \quad -y(2)\\
    \text{s.t.} & \quad \dot{y} = \frac{5}{2}\left(-y+yu-u^2\right)\\
    & \quad y(0) = 1
\end{align}
The discretized nonlinear programming problem is solved using the interior-point optimizer IPOPT~\cite{wachterImplementationInteriorpointFilter2006} with convergence tolerance set to $10^{-13}$.

Figure~\ref{fig:collocation_errors} shows the $L_\infty$ norm of errors in the state, control, and costate variables at the collocation points, where Gauss and Radau refer to the LG and LGR methods in differential form, respectively, and Lobatto refers to the proposed augmented LGL method. In contrast to the results reported in \citet{gargUnifiedFrameworkNumerical2010a}, the error magnitudes for the augmented LGL method are comparable to those of the LG and LGR methods for both state and control variables. While the costate variable initially shows higher error values in the LGL implementation, these errors converge to levels similar to those of the other methods, being limited primarily by the numerical precision of the optimization solver. These results demonstrate that the proposed augmented LGL collocation method successfully addresses the fundamental limitations of the standard LGL approach, ensuring proper convergence of the adjoint system to the continuous costate equations.

\begin{figure}
    \centering
    \includegraphics[width=0.8\columnwidth]{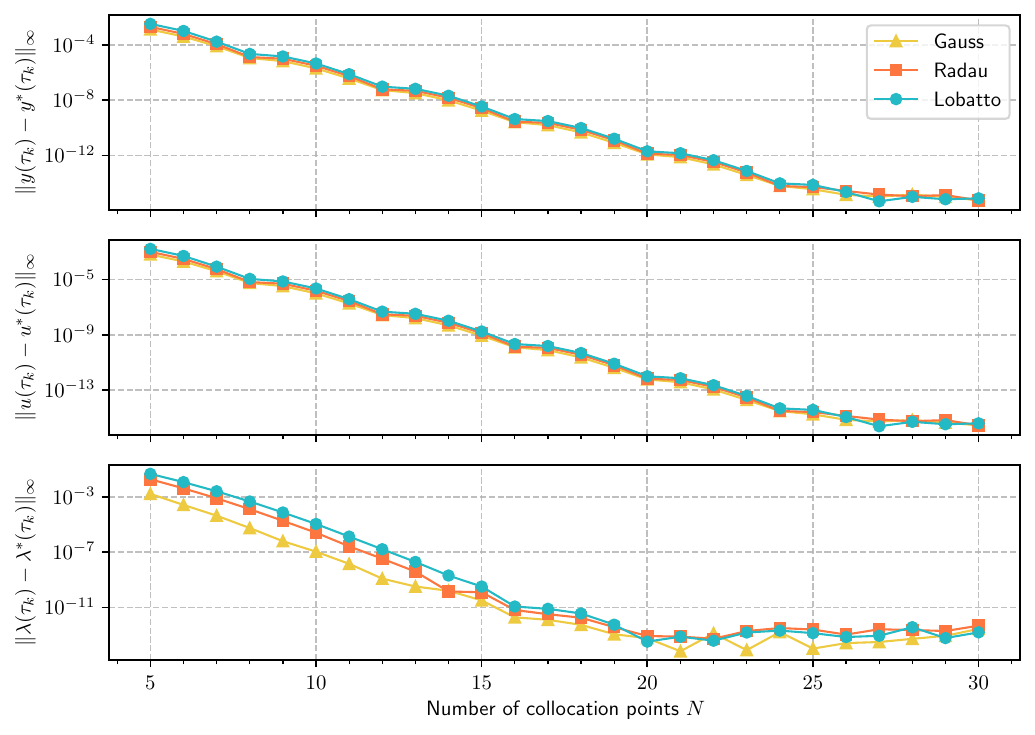}
    \caption{$L_\infty$ norm of errors in state, control, and costate variables at the collocation points.}
    \label{fig:collocation_errors}
\end{figure}

\subsection{Long-Horizon Oscillatory Problem Testing the Symplecticity Property}\label{sec:example2}
To test the symplecticity property of the LGL method, we consider a long-horizon oscillatory problem: the multiple-revolution low-thrust trajectory optimization problem for spacecraft, which is of vital importance in aerospace engineering. The problem involves four state variables $[p, f, g, l]$ and two control variables $[a_r, a_t]$.
The optimization problem with its dynamics is given by
\begin{align}
    \min_{a_r, a_t} & \quad \int_0^T \left(a_r^2 + a_t^2\right) \,\mathrm{d}t\\
    \text{s.t.} & \quad \dot{p} = \frac{2 p a_t}{w} \sqrt{\frac {p}{\mu}} \quad\label{eq:dotp}\\
    & \quad\dot{f} = \sqrt{\frac{p}{\mu}} \left\{  a_r \sin l +\frac{\left[(w+1)\cos l+f\right]a_t}{w}\right\}\\
    & \quad\dot{g} = \sqrt{\frac{p}{\mu}} \left\{  -a_r \cos l +\frac{\left[(w+1)\sin l+g\right]a_t}{w}\right\}\\
    & \quad\dot{l} = \sqrt{\mu p} \left(\frac wp\right)^2\label{eq:dotL}
\end{align}
where $w = 1 + f\cos l + g\sin l$, and $\mu$ is the gravitational parameter. 

The initial and final conditions are given by
\begin{align}
    &p(0) = 9128\, \text{km}, \quad f(0) = g(0) = l(0) = 0\\
    &p(T) = 42164\, \text{km}, \quad f(T) = g(T) = 0, \quad l(T) = 250\pi
\end{align}
where the final time $T$ is free. The final time is treated as an optimization variable in the discretized problem to linearly scale the time interval. This free final time $T$ does not affect the parameterization of the state and control variables, and the analysis presented in this paper remains valid.

The problem is solved using our Python-based pseudospectral optimal control package \emph{pockit}\footnote{The package is free and open-source, available at \textsf{https://github.com/zouyilin2000/pockit}}. We solve the NLP problem using the IPOPT~\cite{wachterImplementationInteriorpointFilter2006} solver. The problem is discretized using both LGR and LGL methods in the integral formulation, with $M = 888$ subintervals and varying numbers of collocation points per subinterval. We set the convergence tolerance to $10^{-12}$, while keeping all other parameters at their default values. For computational purposes, we set the length unit as Earth's radius ($6378.1363 \text{ km}$) and the time unit as $1$ day. The gravitational parameter $\mu$ is set to $398600.4418 \text{ km}^3/\text{s}^2$.

Figure~\ref{fig:covector_hamiltonian} shows the Hamiltonian values for both LGR and LGL methods over time, with $N = 3$ collocation points per subinterval. The LGR method exhibits a lower mean Hamiltonian value initially and fails to oscillate around a constant value. In contrast, the LGL method's Hamiltonian oscillates around $0$, which aligns with our theoretical analysis that the LGL discretization method automatically forms a symplectic integrator for the Hamiltonian system. Additionally, from the calculus of variations and the PMP, we know that the Hamiltonian for the optimal trajectory should be identically $0$ throughout the entire time interval. Although the LGR method's Hamiltonian stops oscillating near the final time, it shows a slight deviation from $0$. Conversely, the LGL method demonstrates the desired behavior as its Hamiltonian value approaches $0$ with diminishing oscillations over time.

Figure~\ref{fig:covector_error} shows the absolute error of the objective function for both LGR and LGL methods across various numbers of collocation points $N$. The reference solution was obtained using the LGR method at significantly higher resolution ($M = 5555$ and $N = 12$) than the test cases. As illustrated in the figure, the LGL method achieves errors that are orders of magnitude lower than the LGR method, demonstrating the superior effectiveness of the LGL method for long-horizon problems.

\begin{figure}
    \centering
    \begin{subfigure}{0.48\textwidth}
        \centering
        \includegraphics[width=\textwidth]{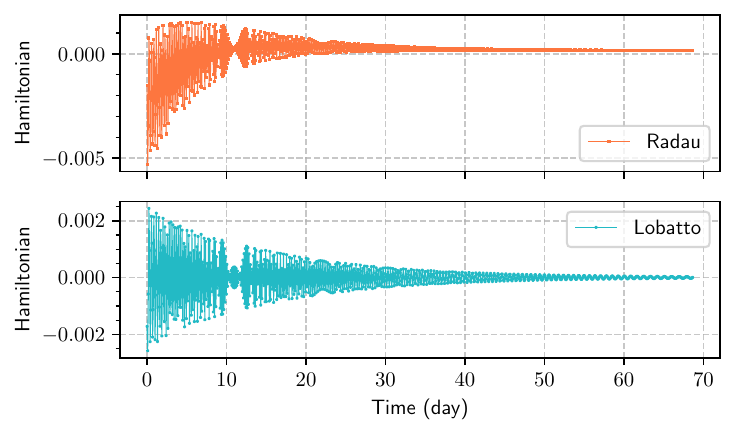}
        \caption{The Hamiltonian value over time, $N = 3$.}
        \label{fig:covector_hamiltonian}
    \end{subfigure}
    \hfill
    \begin{subfigure}{0.48\textwidth}
        \centering
        \includegraphics[width=\textwidth]{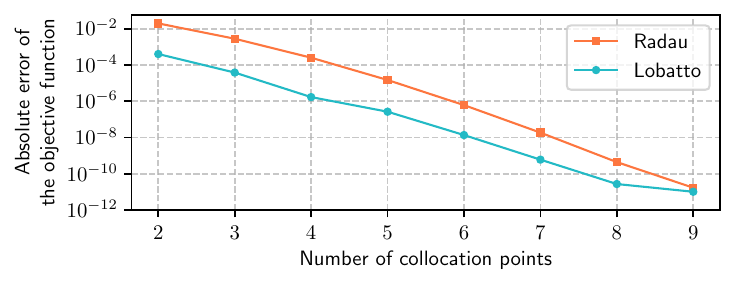}
        \caption{The absolute error of the objective function for $N$ ranging from $2$ to $9$.}
        \label{fig:covector_error}
    \end{subfigure}
    \caption{Numerical results for the long-horizon oscillatory problem.}
    \label{fig:covector_results}
\end{figure}

\section{Conclusion}\label{sec:conclusion}
Pseudospectral collocation methods are widely used in optimal control problems due to their accuracy and computational efficiency. Historically, the LGL collocation method has been considered less effective than the LG and LGR methods, primarily because of issues with establishing convergence of the adjoint system. This paper presents a thorough re-examination of the LGL collocation method.

We proposed an augmented formulation that addresses the limitations of the standard LGL approach. This augmented LGL collocation method is mathematically equivalent to the integral formulation of the LGL method, differing only in the implicit representation of the auxiliary variable in the integral formulation. Through analytical derivation, we established that the adjoint system for both the augmented differential and integral formulations corresponds to a Lobatto IIIB discontinuous collocation method for the costate vector. Our comparative analysis of the LG, LGR, and LGL methods demonstrated that the LGL approach requires fewer variables in the NLP problem, making it computationally more efficient for practical applications. We also examined the symplectic property of the LGL method, revealing its advantages for long-horizon problems where preserving the Hamiltonian structure is important. We validated our theoretical findings through two numerical examples. First, a simple test case demonstrating the convergence properties of the augmented LGL method, showing that the augmented formulation resolves the convergence issues of the standard LGL method reported in the literature. The state, control, and costate variables converge to the exact solution exponentially, similar to the LG and LGR methods. Second, a long-horizon oscillatory problem highlighting the symplectic advantages of the LGL approach. The results indicate that the LGL method preserves the Hamiltonian value as expected and achieves significantly lower errors in the objective function compared to the LGR method, demonstrating its superior performance for long-horizon trajectory optimization problems.

The results presented in this paper indicate that the LGL method is a robust and efficient choice for solving optimal control problems, particularly for long-horizon scenarios. This work provides a comprehensive reassessment of the LGL method's capabilities and establishes its competitive position among pseudospectral methods for practical optimal control applications.

\section*{Acknowledgments}
This work was supported by the National Natural Science Foundation of China (Grant No.12472355).

\bibliography{lobatto}

\end{document}